\def\ps@headings{%
\def\@oddhead{\mbox{}\scriptsize\rightmark \hfil \thepage}%
\def\@evenhead{\scriptsize\thepage \hfil \leftmark\mbox{}}%
\def\@oddfoot{}%
\def\@evenfoot{}}
\newcommand{\expec}{\mathbb{E}}
\newtheorem{lem}{Lemma}
\newtheorem{thm}{Theorem}
\newtheorem{defn}{Definition}
\newtheorem{conj}{Conjecture}
\newtheorem{rem}{Remark}
\newtheorem{exmp}{Example}
\title{The Effect of Block-wise Feedback on the Throughput-Delay Trade-off in Streaming}
\author{
\authorblockN{Gauri Joshi}
\authorblockA{EECS Dept.,
MIT\\
Cambridge, MA 02139, USA \\
Email: gauri@mit.edu}
\and
\authorblockN{Yuval Kochman}
\authorblockA{School of CSE, HUJI \\
Jerusalem, Israel \\
Email: yuvalko@cs.huji.ac.il}
\and
\authorblockN{Gregory W. Wornell}
\authorblockA{EECS Dept.,
MIT\\
Cambridge, MA 02139, USA \\
Email: gww@mit.edu}
}
\begin{document}

\sloppy

\maketitle
\begin{abstract}
Unlike traditional file transfer where only total delay matters, streaming applications impose delay constraints on each packet and require them to be \emph{in order}. To achieve fast in-order packet decoding, we have to compromise on the throughput. We study this trade-off between throughput and in-order decoding delay, and in particular how it is affected by the frequency of block-wise feedback to the source. When there is immediate feedback, we can achieve the optimal throughput and delay simultaneously. But as the feedback delay increases, we have to compromise on at least one of these metrics. We present a spectrum of coding schemes that span different points on the throughput-delay trade-off. Depending upon the delay-sensitivity and bandwidth limitations of the application, one can choose an appropriate operating point on this trade-off. 
\end{abstract}

\section{Introduction}
\label{sec:intro}

\subsection{Motivation}
A recent report \cite{sandvine_report} shows that $62 \%$ of the Internet traffic in North America comes from real-time streaming applications such as NetFlix ($28.88\%$) and YouTube ($15.43 \%$). Streaming traffic consumes such a large fraction of Internet bandwidth because video files inherently have a larger size than other forms of data. Thus, there is a need to develop transmission schemes which can ensure a high quality of experience to the user, with efficient use of available bandwidth. 

Unlike traditional file transfer where only total delay matters, streaming imposes delay constraints on each individual packet. Further, many applications require in-order playback of packets at the receiver. Packets received out of order are buffered until the missing packets in the sequence are successfully decoded. In audio and video applications some packets can be dropped without affecting the streaming quality. However, other applications such as remote desktop, and collaborative tools such as Dropbox and Google Docs have strict order constraints on packets, where packets represent instructions that need to be executed in order at the receiver. 

To ensure that packets are decoded in order, the transmission scheme must give higher priority to older packets that were delayed, or received in error due to channel noise. However, repeating old packets instead of transmitting new packets results in a loss in the overall rate at which packets are delivered to the user, that is, the throughput. Thus there is a fundamental trade-off between throughput and in-order decoding delay. 

The throughput loss incurred to achieve low in-order decoding delay can be significantly reduced if the source receives feedback about packet losses, and thus can adapt its future transmission strategy to strike the right balance between old and new packets. We study this interplay between feedback and the throughput-delay trade-off. 


\subsection{Previous Work}

Only a few papers in literature have analyzed streaming codes. Fountain codes \cite{luby} are capacity-achieving erasure codes, but they are not suitable for streaming because the decoding delay is proportional to the size of the data. Streaming codes without feedback for constrained channels such as adversarial and cyclic burst erasure channels were first proposed in \cite{emin_thesis}, and also extensive explored in \cite{badr_khisti_isit, patil_badr_khisti}. The thesis \cite{emin_thesis} also proposed codes for more general erasure models and analyzed their decoding delay. Decoding delay has also been analyzed studied in \cite{arq_shah_medard, delay_control_online_nw_coding} in a multicast scenario with immediate feedback to the source.

However, decoding delay does not capture \emph{in order } packet delivery which is required for streaming applications. This aspect is captured in the delay metrics in \cite{huan_paper} and \cite{gauri_isit_paper}, which consider that packets are played in-order at the receiver. The authors in \cite{huan_paper} analyze the throughput-delay trade-off for uncoded packet transmission over a channel with long feedback delay. In \cite{gauri_isit_paper} we propose coding schemes that minimize playback delay in point-to-point streaming for the no feedback and immediate feedback cases. However, the case of block-wise feedback to the source remains to be explored. 

\subsection{Our Contributions}
In this paper we consider this unexplored problem of how to effectively utilize block-wise feedback to the source to ensure in-order packet delivery to the user. In contrast to playback delay considered in \cite{huan_paper} and \cite{gauri_isit_paper}, we propose a more versatile delay metric called the in-order decoding exponent. This metric captures the burstiness in the in-order decoding of packet for applications which require packets in-order, but do not necessarily play them at a constant rate.

When there is immediate feedback, we can achieve the best throughput-delay trade-off. But when the feedback comes in blocks, we have to compromise on the throughput to ensure fast in-order decoding. We present a spectrum of coding schemes that span different points on the throughput-delay trade-off. Depending upon the delay-sensitivity, and bandwidth limitations of the application, one can choose an appropriate operating point on this trade-off. The proposed codes can be shown to be optimal over a broad class of schemes for the no feedback, and small feedback delay cases. 

\section{Problem Setup}
\label{sec:prob_setup}


\subsection{System Model}
We consider a point-to-point packet streaming scenario where the source has a large stream of packets $s_1, s_2, \cdots, s_n$. The encoder creates a coded packet $y_{n} = f(s_1,\,\,s_2 \,\,..s_n)$ in each slot $n$ and transmits it over the channel. The encoding function $f$ is known to the receiver. For example, if $y_{n}$ is a linear combination of the source packets, the coefficients are included in the transmitted packet so that the receiver can use them to decode the source packets from the coded combination. Without loss of generality, we can assume that $y_n$ is a linear combination of the source packets. 

We consider an i.i.d.\ packet erasure channel where every transmitted packet is correctly received with probability $p$, and otherwise received in error and discarded. An erasure channel is a good model when encoded packets have a set of checksum bits that can be used to verify with high probability whether the received packet is error-free. 

The receiver application requires the stream of packets to be \emph{in order}. Packets received out of order are buffered until the missing packets in the sequence are decoded. Due to this in-order property, the transmitter can stop including $s_k$ in coded packets when it knows that the receiver can decode $s_k$ once all $s_i$ for $i < k$ are decoded. We refer such packets as ``seen" packets. The notion of ``seen" is defined formally as follows. 

\begin{defn}[Seen Packets]
\label{defn:seen_pkts}
A packet $s_k$ is said to be ``seen" by the transmitter when it knows that a coded combination that only includes $s_k$ and packets $s_i$ for $1 \leq i \leq k$ is received successfully. 
\end{defn}

We consider that the source receives block-wise feedback about channel erasures after every $d$ slots. Thus, before transmitting in slot $kd+1$, for all integers $k \geq 1$, the source knows about the erasures in slots $(k-1)d +1$ to $kd$. It can use this information to adapt its transmission strategy in slot $kd +1$. Block-wise feedback can be used to model a half-duplex communication channel where after every $d$ slots of packet transmission, the channel is reserved for the receiver to send feedback about the status of decoding. Note that the feedback can be used to estimate $p$, the probability of success of the erasure channel, when it is unknown to the source.

%
%

\subsection{Throughput and Delay Metrics}
We consider two metrics to measure the quality of streaming, the throughput $\tau$ and in-order decoding exponent $\lambda$. The throughput is the rate at which ``innovative" coded packets are received. A coded packet is said to be ``innovative" if it is linear independent with respect to the coded packets received until then. The bandwidth required is proportional to $1/ \tau$. 
%
%
The throughput captures the overall rate at which packets go through the channel, irrespective of the order. The \emph{in-order} decoding aspect is captured by a metric called the in-order decoding exponent $\lambda$ which is defined as follows.

\begin{defn}[In-order Decoding Exponent]
\label{defn:lambda_def}
Let $T$ be the time between two successive instants of decoding one or more packets in-order. Then the in-order decoding exponent $\lambda$ is
\begin{equation}
\lambda \triangleq -\lim_{n \rightarrow \infty} \frac{\log \Pr(T>n)}{n}. \label{eqn:lambda_def}
\end{equation}
\end{defn}
The relation \eqref{eqn:lambda_def} can also be stated as $\Pr(T>n) \doteq e^{-n \lambda}$ where $\doteq$ stands for asymptotic equality defined in \cite[Page 63]{thomas_cover}. The in-order decoding exponent captures the burstiness in packet decoding. For example, if the streaming application plays one in-order packet in every slot, and if there are $b$ packets in the receiver buffer, then the probability of an interruption in playback is proportional to $e^{-\lambda b}$.

In this paper we analyze how the trade-off between $\tau$ and $\lambda$ is affected by the block-wise feedback delay $d$. We first consider the extreme cases of immediate feedback $(d=1)$ and no feedback $(d=\infty)$ in Section~\ref{sec:immediate_feedback} and Section~\ref{sec:no_feedback} respectively. This gives us insights into the analysis of the $(\tau,\lambda)$ trade-off for general $d$ in Section~\ref{sec:block_wise_fb}. 

\section{Immediate Feedback}
\label{sec:immediate_feedback} 

In the immediate feedback $(d=1)$ case, the source has complete knowledge of past erasures before transmitting each packet. We can show that a simple automatic-repeat-request (ARQ) scheme is optimal in both $\tau$ and $\lambda$. In this scheme, the source transmits the lowest index unseen packet, and repeats it until the packet successfully goes through the channel. 

Since a new packet is received in every successful slot, the throughput $\tau =p$, the success probability of the erasure channel. The ARQ scheme is throughput-optimal because the throughput $\tau = p$ is equal to the information-theoretic capacity of the erasure channel \cite{thomas_cover}. Moreover, it also gives the optimal the in-order decoding exponent $\lambda$ because one in-order packet is decoded in every successful slot. To find $\lambda$, first observe that the tail distribution of the time $T$, the interval between successive in-order decoding instants is,
\begin{align}
\Pr(T>n) &= (1-p)^{n} 
\end{align}
Substituting this in Definition~\ref{defn:lambda_def} we get the exponent $\lambda = -\log(1-p)$. Based on this analysis of the immediate feedback case, we can find limits on the range of achievable $(\tau,\lambda)$ for any feedback delay $d$ as follows.
\begin{lem}
\label{lem:range_of_tau_lambda}
The throughput and delay metrics $(\tau,\lambda)$ achievable for any feedback delay $d$ lie in the region $0 \leq \tau \leq \rho$, and $0 \leq \lambda \leq -\log(1-p)$. 
\end{lem}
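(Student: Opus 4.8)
The plan is to establish the four inequalities separately, observing that the two lower bounds are immediate while the two upper bounds are dictated by the erasure channel itself rather than by how much feedback is available, so they hold uniformly over every delay $d$. The bound $\tau \ge 0$ holds because the throughput is a rate of reception and is non-negative by definition. The bound $\lambda \ge 0$ follows directly from Definition~\ref{defn:lambda_def}: since $\Pr(T>n)\le 1$ we have $\log\Pr(T>n)\le 0$, hence $-\log\Pr(T>n)\ge 0$, and the limiting exponent is non-negative.

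For the throughput upper bound I would argue that an innovative packet can only be received in a slot that is not erased. In $N$ slots the number of successful receptions is $\mathrm{Binomial}(N,p)$, so by the law of large numbers the fraction of successful slots converges to $p$ almost surely; therefore the rate of innovative packets cannot exceed $p$, giving $\tau\le p=\rho$, where $\rho$ denotes the erasure-channel capacity. Equivalently, $p$ is the Shannon capacity of the i.i.d.\ erasure channel, and feedback cannot increase the capacity of a memoryless channel, so no scheme with any $d$ can beat it. The immediate-feedback ARQ scheme attains $\tau=p$, so the bound is tight.

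For the delay-exponent upper bound I would use the observation that the receiver acquires no new information in an erased slot and therefore cannot perform any in-order decoding during a run of consecutive erasures. Fixing an in-order decoding instant and examining the next $n$ slots, the event that all of them are erased has probability $(1-p)^n$ and forces the subsequent in-order decoding to occur strictly after slot $n$; thus $\Pr(T>n)\ge(1-p)^n$ for every scheme and every $d$. Substituting into \eqref{eqn:lambda_def} gives
\begin{equation}
\lambda = -\lim_{n\rightarrow\infty}\frac{\log\Pr(T>n)}{n} \le -\lim_{n\rightarrow\infty}\frac{\log(1-p)^n}{n} = -\log(1-p),
\end{equation}
and the immediate-feedback analysis already exhibits equality, so this bound is tight as well.

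The main obstacle is making the ``no information in an erased slot'' step watertight across arbitrary, possibly coded and possibly randomized, transmission strategies: I need the erasures to be independent of the causal encoding decisions, so that an all-erased window genuinely has probability $(1-p)^n$ and genuinely precludes any in-order decoding no matter what the source chose to send. Since the channel is i.i.d.\ and the encoder is causal, this independence holds, and the renewal structure of successive in-order decoding instants makes the inequality $\Pr(T>n)\ge(1-p)^n$ valid for the inter-decoding time $T$ used in Definition~\ref{defn:lambda_def}.
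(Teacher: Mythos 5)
Your proof is correct, but it proceeds along a genuinely different route from the paper's. The paper disposes of the lemma in two sentences by a reduction argument: any scheme operating with feedback delay $d>1$ has strictly less information than a source with immediate feedback, so its $(\tau,\lambda)$ cannot beat the optimal immediate-feedback point $(p,-\log(1-p))$ established in Section~\ref{sec:immediate_feedback}; the bounds for general $d$ are then inherited from the $d=1$ analysis. You instead prove both upper bounds directly from the channel statistics, uniformly over all schemes and all $d$: the throughput bound via the law of large numbers on successful slots (at most one innovative packet per unerased slot), and the exponent bound via the observation that an all-erased window of length $n$ after a decoding instant occurs with probability $(1-p)^n$ and forces $T>n$, giving $\Pr(T>n)\geq(1-p)^n$ and hence $\lambda\leq-\log(1-p)$. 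The paper's approach is shorter given the preceding section, but its force rests on two claims it states only informally: that a delayed-feedback scheme can always be emulated under immediate feedback, and that ARQ is $\lambda$-optimal for $d=1$ (asserted via ``one in-order packet is decoded in every successful slot'' without a formal converse). Your all-erasure-window argument is precisely the missing converse that makes the exponent bound rigorous, and it bypasses the emulation step entirely; the price is that you must (and do) address the independence of erasures from causal, possibly randomized encoding decisions, which is what validates $\Pr(T>n)\geq(1-p)^n$ for arbitrary strategies.
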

\begin{proof}
When feedback is received after blocks of $d > 1$ slots, the source has less knowledge about past erasures than in the immediate feedback ($d=1$) case. Thus, the $(\tau, \lambda)$ trade-off when $d > 1$ is always worse than $(\tau, \lambda) = (p,-\log(1-p))$ the optimal trade-off for the immediate feedback ($d=1$) case.
\end{proof}

\section{No Feedback}
\label{sec:no_feedback}
Now we consider the other extreme case $(d = \infty)$, where there is no feedback to the source about channel erasures. We propose a coding scheme and prove that it gives the best $(\tau, \lambda)$ trade-off among a class of codes called full-rank codes which are defined as follows. 
\begin{defn}[Full-rank Codes]
\label{defn:full_rank_codes}
In slot $n$ we transmit a linear combination of all packets $s_1$ to $s_{V[n]}$, where the coefficients are chosen from a large enough field such that the coded combinations are independent with high probability. We refer to $V[n]$ as the transmit index in slot $n$. 
\end{defn} 
\begin{conj}
\label{conj:full_rank}
Since the packets are required in-order at the receiver, we believe that given transmit index $V[n]$, there is no loss of generality in including all packets $s_1$ to $s_{V[n]}$.
\end{conj}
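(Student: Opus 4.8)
The plan is to read the conjecture operationally: ``no loss of generality'' should mean that for every full-rank code (Definition~\ref{defn:full_rank_codes}) that, in slot $n$, mixes only an arbitrary subset $S_n\subseteq\{s_1,\dots,s_{V[n]}\}$ with $\max S_n = V[n]$, the ``prefix'' code that mixes \emph{all} of $s_1,\dots,s_{V[n]}$ with the same transmit-index sequence achieves a weakly larger throughput $\tau$ \emph{and} a weakly larger in-order decoding exponent $\lambda$. Since both codes use the same $\{V[n]\}$, and $V[n]$ is computed from the shared block-wise feedback (the erasure locations) rather than from the coefficients, I would couple the two schemes on a common erasure realization and prove a slot-by-slot, per-realization domination of the prefix code over the subset code in both metrics; a per-realization domination of the inter-decoding time $T$ transfers directly to a comparison of $\Pr(T>n)$ and hence of $\lambda$ through Definition~\ref{defn:lambda_def}.

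The throughput comparison is the clean half and I would dispatch it first. A received combination is innovative exactly when it lies outside the current received row space $U$. A full-prefix combination is uniform on the coordinate space $P=\mathrm{span}\{e_1,\dots,e_{V[n]}\}$, so it fails to be innovative with probability $|U\cap P|/|P|$; a subset combination is uniform on the smaller coordinate space $Q=\mathrm{span}\{e_j:j\in S_n\}\subseteq P$, failing with probability $|U\cap Q|/|Q|$. Writing $W=U\cap P$, the elementary dimension bound $\dim(W\cap Q)\ge \dim W+\dim Q-\dim P$ gives $|W\cap Q|/|Q|\ge |W|/|P|$, so the subset combination is weakly \emph{more} likely to be non-innovative. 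Hence $\tau$ is weakly larger for the prefix code, for every realization and over any field.

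The in-order half is where the real content lies, and I would attack it through a support-based characterization of decodability. Because the receiver needs packets \emph{in order}, the relevant quantity after a set of successful receptions is the length $L$ of the decoded prefix. Letting $R_m$ denote the number of received combinations whose support has maximum index at most $m$, I would prove that for the prefix code $L=\max\{m:R_m\ge m\}$ with high probability over the random coefficients (large field), since $R_m$ full-support combinations inside $\{s_1,\dots,s_m\}$ generically attain rank $m$. I would then show that any subset code can only do worse: its combinations with maximum index at most $m$ are supported on a proper coordinate subspace of $\{s_1,\dots,s_m\}$, so the rank of every prefix submatrix, and therefore the decoded prefix $L$, is weakly smaller. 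Per-realization this makes $T$ stochastically larger for the subset code, hence $\lambda$ weakly smaller, completing the domination.

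The hard part, and the reason this is stated as a conjecture rather than a theorem, will be making the in-order monotonicity fully rigorous. In-order decodability is governed by membership of the unit covectors $e_1,\dots,e_m$ in the accumulated row space, and this membership is neither a simply monotone nor a generically-open function of the supports: combinations reaching beyond index $m$ can be \emph{combined and cancelled} to produce usable equations inside $\{s_1,\dots,s_m\}$, so a naive specialization argument (treating a subset combination as a degenerate full-prefix combination) does not transfer the needed rank equalities in the favorable direction. I would therefore have to prove the clean formula $L=\max\{m:R_m\ge m\}$ for the prefix code by a careful staircase/induction argument on the sorted transmit indices, establish a matching upper bound $L\le\max\{m:R_m\ge m\}$ for \emph{arbitrary} subset supports, and carry the large-field genericity (or an explicit $o(1)$ correction) through the exponent so that it does not perturb $\lambda$. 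Once these two rank statements and the coupling on the feedback-determined policy $\{V[n]\}$ are in place, the slot-by-slot domination in both metrics yields the claimed absence of loss.
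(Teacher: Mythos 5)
The first thing to flag is that the paper has no proof for you to match: this statement is explicitly a \emph{conjecture}, supported only by the informal sentence preceding it, and it is left open (the later Conjecture~\ref{conj:time_invar} leans on it, also without proof). So the only question is whether your proposal actually settles it. It does not --- by your own admission the ``hard part'' is left as a plan --- and, more importantly, the specific lemmas you propose for that hard part are false, so the plan as designed cannot be completed.

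Concretely, the per-realization domination you build everything on fails, and the staircase upper bound $L \le \max\{m : R_m \ge m\}$ for subset codes is wrong. Take $V[1]=V[2]=3$ with both slots received. A subset code may send $s_1+s_3$ and then $s_3$: here $R_1=R_2=0$ and $R_3=2<3$, so $\max\{m : R_m \ge m\}=0$, yet the receiver cancels $s_3$ and decodes $s_1$, so $L=1$. The prefix code with the same transmit indices sends two generic combinations of $s_1,s_2,s_3$ and decodes nothing, so $L=0$. Thus on this realization the subset code is strictly \emph{better} in the in-order metric than the prefix code --- exactly the cancellation effect you flag in your final paragraph, except that it does not merely threaten the rigor of the argument, it falsifies the slot-by-slot, per-realization coupling itself. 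Any successful proof must therefore be statistical and asymptotic (a statement about the exponent $\lambda$ and throughput $\tau$ across erasure realizations), not a pathwise domination; note this also means the counterexample does not refute the conjecture, only your route to it. Separately, your throughput argument compares innovation probabilities of the two schemes conditioned on the \emph{same} received row space $U$, but once the schemes diverge their row spaces differ, so the inequality $|U\cap Q|/|Q|\ge|U\cap P|/|P|$ does not chain across slots; that half is repairable (for generic coefficients the rank of the received matrix equals the maximum bipartite matching of its support pattern, and prefix supports contain subset supports, so the prefix rank dominates realization by realization), but the in-order half needs a genuinely different idea than the one you outline.
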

Hence we believe that there is no loss of generality in restricting our attention to full-rank codes. 

\begin{thm}
\label{thm:opt_no_feedback}
The optimal throughput-delay trade-off among full-rank codes is $(\tau, \lambda) = (r , D(r|| p))$ for all $0 \leq r \leq p$. It is achieved by the coding scheme with $V[n] = \lceil rn \rceil$ for all $n$. 
\end{thm}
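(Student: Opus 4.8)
The plan is to establish the two directions separately—that the scheme $V[n]=\lceil rn\rceil$ attains $(\tau,\lambda)=(r,D(r\|p))$, and that no full-rank code can do better—with the large-deviations rate of the Binomial reception count carrying the load in both. For achievability I would first track the receiver rank $R[n]$, the number of innovative packets collected by slot $n$. Writing $X_n\sim\mathrm{Bernoulli}(p)$ for the i.i.d.\ reception indicators and $A[n]=\sum_{i\le n}X_i$ for the success count, a successful reception in slot $n$ is innovative precisely when $R[n-1]<V[n]$, giving the Lindley-type recursion $R[n]=\min\!\big(R[n-1]+X_n,\,V[n]\big)$. Since $r\le p$, the ceiling $V[n]=\lceil rn\rceil$ grows no faster than $A[n]$, so $R[n]$ hugs the ceiling and $\tau=\lim_n R[n]/n=r$. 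The structural fact I would then prove is that in-order decoding advances exactly at the ``catch-up'' instants $R[n]=V[n]$: when the rank meets the ceiling the received subspace equals $\langle s_1,\dots,s_{V[n]}\rangle$, so the whole prefix is decoded in order, whereas while $R[n]<V[n]$ the leading pivot rows retain unresolved higher-index components and the in-order front is frozen.

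Given this identification, computing $\lambda$ reduces to a tail estimate. A gap $T>n$ is, up to the usual renewal argument from a catch-up instant, the event that across a window of $n$ slots the rank never overtakes the ceiling, i.e.\ the successes in the window stay below its growth, at most $\approx rn$. Because $A[n]$ is $\mathrm{Binomial}(n,p)$ with $r<p$, Cram\'er's theorem gives $\Pr(A[n]\le rn)\doteq e^{-nD(r\|p)}$, and I would argue that such an under-performing window is the dominant contribution to $\{T>n\}$, so that $\Pr(T>n)\doteq e^{-nD(r\|p)}$ and hence $\lambda=D(r\|p)$ by Definition~\ref{defn:lambda_def}.

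For the converse, fix any full-rank code; with no feedback its transmit index is a fixed nondecreasing sequence $V[n]$. Throughput $r<p$ forces $V[n]/n\to r$, and since each in-order packet consumes a distinct innovative reception we always have $M[n]\le R[n]\le A[n]$, where $M[n]$ denotes the in-order front. I would then lower-bound the gap tail: a window of $n$ slots carrying at most $rn$ successes---an event of probability $\doteq e^{-nD(r\|p)}$ by the same Cram\'er estimate---starves the front enough to block any advance across the window, so $\Pr(T>n)\gtrsim e^{-n(D(r\|p)+o(1))}$ and therefore $\lambda\le D(r\|p)$. Together with achievability this pins the Pareto frontier at $(r,D(r\|p))$, consistent with the limits of Lemma~\ref{lem:range_of_tau_lambda}.

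I expect the main obstacle to lie in the exact decoding characterization used for achievability: rigorously showing that the in-order front moves only at catch-up instants requires controlling the pivot structure of the generic reception matrix over the coding field---in particular that a rank deficit genuinely leaves the top coordinate unresolved with high probability. The large-deviations steps themselves are routine once $\{T>n\}$ is tied to a lower-tail deviation of $A[n]$; the delicate point is matching the upper and lower rates so that the stall event and the Binomial lower tail share the exponent $D(r\|p)$, and confirming in the converse that no non-linear schedule $V[n]$ can evade this rate.
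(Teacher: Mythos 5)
Your overall strategy is the same as the paper's: identify $\{T>n\}$ with the event that the success-count path stays below the transmit-index curve, and extract the exponent from the binomial lower tail $\Pr(A[n]\lesssim rn)\doteq e^{-nD(r\|p)}$. Your catch-up characterization (the in-order front advances exactly at instants where the received rank $R[n]$ meets $V[n]$) is correct, and is in fact a cleaner and more defensible statement of what the paper asserts when it writes ``all packets $s_1,\dots,s_{V[n]}$ are decoded when $E[n]\geq V[n]$.'' The achievability upper bound $\Pr(T>n)\le\Pr(A[n]<\lceil rn\rceil)\doteq e^{-nD(r\|p)}$, giving $\lambda\ge D(r\|p)$, is also fine.

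The genuine gap is in the converse, and equally in the lower bound you need to claim $\Pr(T>n)\doteq e^{-nD(r\|p)}$ (rather than just $\le$) for the achievability direction. You assert that a window carrying at most $rn$ successes ``blocks any advance across the window,'' i.e.\ $\{A[n]\le rn\}\subseteq\{T>n\}$. This inclusion is false: by your own catch-up lemma, the front advances at the \emph{first} instant the running success count reaches the increment of $V[\cdot]$, and that can happen early in the window even when the total count is small. Concretely, with $V[j]=\lceil rj\rceil$, if the first slot after a renewal is successful then $T=1$ no matter how few successes follow, so most of the event $\{A[n]\le rn\}$ lies outside $\{T>n\}$. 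The correct statement, and the paper's key technical ingredient, is a conditional path estimate: given $E[n]=k<\lceil rn\rceil$, the probability that $E[j]$ stays below $V[j]$ for \emph{all} $j\le n$ is only polynomially small (a generalized ballot / cycle-lemma bound, cited from Durrett), hence sub-exponential, so that $\Pr(T>n)\doteq\Pr\bigl(E[n]<\lceil rn\rceil\bigr)\doteq e^{-nD(r\|p)}$. You flag ``matching the upper and lower rates'' as the delicate point, but the mechanism you propose (set inclusion) cannot supply it; without the ballot-type estimate neither the exactness of the exponent nor $\lambda\le D(r\|p)$ is established. A secondary gap: your converse assumes throughput $r$ forces $V[n]/n\to r$ and then treats only a single linear boundary, whereas the paper must (and does) also dispose of full-rank schedules that add packets at varying rates $r_i$ over sub-intervals $n_i$, showing the ballot term remains sub-exponential there as well, before every full-rank code is reduced to the linear case.
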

The term $D(r||p)$ is the binary information divergence function which is defined for $0 < p,r < 1$ as
\begin{equation}
D(r || p ) = r \log \frac{r}{p} + (1-r) \log \frac{1-r}{1-p}.
\end{equation}
Note that as $r \rightarrow 0$, $D(r||p)$ converges to $-\log(1-p)$, which is the optimal $\lambda$, as given by Lemma~\ref{lem:range_of_tau_lambda}. 

To prove Theorem~\ref{thm:opt_no_feedback}, we first show that the scheme with transmit index $V[n] = \lceil rn \rceil$ in time slot $n$ achieves the trade-off $(\tau, \lambda) = (r , D(r|| p)$. Then we prove the converse by showing that no other full-rank scheme gives a better trade-off. 

\begin{proof}[Proof of Achievability]
Consider the scheme with transmit index $V[n] = \lceil r n\rceil$, where $r$ represents the rate of adding new packets to the transmitted stream. The rate of adding packets is below the capacity of the erasure channel. Thus it is easy to see that the throughput $\tau = r$. Let $E[n]$ be the number of combinations, or equations received until time $n$. It follows the binomial distribution with parameter $p$. All packets  $s_1 \cdots s_{V[n]}$ are decoded when $E[n] \geq V[n]$. Define the event $G_n = \{ E[n] < V[n] \text{ for all } 1 \leq j \leq n\}$, that there is no packet decoding until slot $n$. The tail distribution of time $T$ between successive in-order decoding instants is,
\begin{align}
\Pr(T>n) &= \sum_{k=0}^{\lceil n r \rceil - 1}  \Pr(E[n]=k) \Pr(G_n | E[n] = k),\nonumber\\
&= \sum_{k=0}^{\lceil n r \rceil - 1}  \binom{n}{k} p^{k} (1-p)^{n-k} \Pr(G_n | E[n] = k), \nonumber \\
& \doteq \sum_{k=0}^{\lceil n r \rceil - 1}  \binom{n}{k} p^{k} (1-p)^{n-k},  \label{eqn:asym_T_no_fb_1}\\
& \doteq \binom{n}{\lceil n r \rceil - 1} p^{\lceil n r \rceil - 1} (1-p)^{n-\lceil n r \rceil + 1},  \label{eqn:asym_T_no_fb_2}\\
& \doteq e^{-n D(r||p)}, \label{eqn:asym_T_no_fb_3}
\end{align}
where in \eqref{eqn:asym_T_no_fb_1}, we remove the $\Pr(G_n | E[n] = k)$ when we take the asymptotic equality $\doteq$ because, by the Generalized Ballot theorem from \cite{durrett}, we can show that $\Pr(G_n | E[n] = k)$ is $\omega(1/n)$. Hence it is sub-exponential and does not affect the exponent of $\Pr(T> n)$. In \eqref{eqn:asym_T_no_fb_2}, we only retain the $k = \lceil n r \rceil - 1$ term from the summation because for $r \leq p$, that term asymptotically dominates other terms. Finally, we use the Stirlings approximation of the binomial coefficient $\binom{n}{k} \approx e^{n H(k/n)}$ to obtain \eqref{eqn:asym_T_no_fb_3}.

Hence we have proved that the scheme with $V[n] = \lceil rn \rceil$ achieves the throughput-delay trade-off $(\tau, \lambda) = (r , D(r|| p)$.
\end{proof}

\begin{proof}[Proof of Converse]
First let us show that the transmit index $V[n]$ of the optimal full-rank scheme should be non-decreasing in $n$. Given a scheme which does not satisfy the non-decreasing property, we can permute the order of transmitting the coded packets such that $V[n]$ is non-decreasing in $n$. Changing the order of the transmitted packets will not affect the throughput $\tau$. And it can in fact improve the in-order decoding exponent $\lambda$ because decoding can occur sooner when the initial coded packets include fewer source packets.

In the proposed scheme with $V[n] = \lceil rn \rceil$, we add new packets to the transmitted stream at a constant rate $r$. But in general a full-rank scheme can vary the rate of adding packets. Suppose it uses rate $r_i$ for $n_i$ slots for all $1 \leq i \leq L$, such that $\sum_{i=0}^{L} n_i = n$ and $\sum_{i=1}^{L} n_i r_i = nr$. Then, the tail distribution of time $T$ between successive in-order decoding instants is,
\begin{align}
\Pr(T>n) &= \sum_{k=0}^{\lceil \sum_{i=1}^{L} n_i r_i \rceil - 1}  \Pr(E[n]=k) \Pr(G_n | E[n] = k),\label{eqn:asym_T_no_fb_4}\\
&\doteq \sum_{k=0}^{\lceil n r \rceil - 1}  \binom{n}{k} p^{k} (1-p)^{n-k}, \label{eqn:asym_T_no_fb_5} \\
& \doteq e^{-n D(r||p)}. \label{eqn:asym_T_no_fb_6}
\end{align}
Varying the rate of adding packets affects the term $\Pr(G_n | E[n] = k)$ in \eqref{eqn:asym_T_no_fb_4}, but it is still $\omega(1/n)$ and we can eliminate it when we take the asymptotic equality in \eqref{eqn:asym_T_no_fb_5}. As a result, the in-order delay exponent is same as that if we had a constant rate $r$ of adding new packets to the transmitted stream. Hence we have proved that no other full-rank scheme can achieve a better $(\tau, \lambda)$ trade-off than $V[n] = \lceil nr \rceil$ for all $n$.
\end{proof}

Fig.~\ref{fig:inst_fb_no_fb_tradeoff} shows the $(\tau, \lambda)$ trade-off for the immediate feedback and no feedback cases, with success probability $p = 0.6$. The optimal trade-off with any feedback delay $d$ lies in between these two extreme cases.
\begin{figure}[t]
\centering
\includegraphics[width=3.5in]{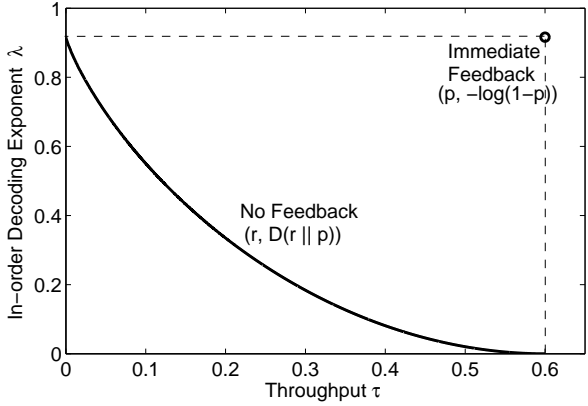}
\caption{The trade-off between in-order decoding exponent $\lambda$ and throughput $\tau$ with success probability $p = 0.6$ for the immediate feedback $(d= 1)$ and no feedback $(d = \infty)$ cases.}
\label{fig:inst_fb_no_fb_tradeoff}
\end{figure}

\section{General Block-wise Feedback}
\label{sec:block_wise_fb}
In Section~\ref{sec:immediate_feedback} and Section~\ref{sec:no_feedback} we considered the extreme cases of immediate feedback $(d=1)$ and no feedback $(d = \infty)$ respectively. We now analyze the $(\tau, \lambda)$ trade-off with general block-wise feedback delay of $d$ slots. We restrict our attention to a class of coding schemes called time-invariant schemes, which are defined as follows. 

\begin{defn}[Time-invariant schemes]
\label{defn:time_invariant}
A time-invariant scheme is represented by a vector $\mathbf{x} = [x_1, \cdots x_d]$ where $x_i$, for $ 1 \leq i \leq d$, are non-negative integers such that $\sum_{i} x_i = d$. In each block we transmit $x_i$ linear combinations of the $i$ lowest-index unseen packets in the stream. 
\end{defn}
The above class of schemes is referred to as time-invariant because the vector $\mathbf{x}$ is fixed across all blocks. Observe that as $d \rightarrow \infty$, the class of time-invariant schemes are equivalent to full-rank codes defined in Definition~\ref{defn:full_rank_codes}.
\begin{conj}
\label{conj:time_invar}
For any coding scheme, there exists a corresponding time-sharing policy between time-invariant schemes that gives the same or strictly better $(\tau, \lambda)$ trade-off. 
\end{conj}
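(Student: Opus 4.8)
The plan is to prove the conjecture in two stages: first reduce an arbitrary coding scheme to a \emph{state-dependent} full-rank scheme that acts only at block boundaries, and then show that such a state-dependent scheme can be replaced by a time-sharing of fixed time-invariant vectors without hurting either metric. For the first stage, I would invoke Conjecture~\ref{conj:full_rank}: within any block, since the packets are required in order, the transmitter loses nothing by restricting to combinations of the lowest-index unseen packets. Because feedback arrives only at block boundaries, the transmitter cannot adapt mid-block, so its action over a block is fully described by how many of the $d$ transmissions are combinations of the lowest $1, 2, \ldots, d$ unseen packets, that is, by a vector of the same form as $\mathbf{x}$, but one that may depend on the current seen-state. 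Since the erasure channel is i.i.d., the seen-state evolves as a Markov chain across blocks, and a general scheme becomes a (possibly randomized and history-dependent) policy mapping states to such vectors. A renewal and stationarization argument then lets me restrict to stationary policies depending only on the current state, since both $\tau$ and $\lambda$ are long-run quantities.

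For the second stage I would characterize $\lambda$ through a large-deviations analysis mirroring the proof of Theorem~\ref{thm:opt_no_feedback}. The event $\{T > n\}$ requires that the number of innovative equations received over $n$ slots stay below the transmit index, a ballot-type event whose exponent is governed by the divergence $D(r || p)$ evaluated along the effective packet-adding rate induced by the policy. A stationary state-dependent policy induces some long-run rate $r$; the key claim is that, by convexity of $D(r || p)$ in $r$ together with the sub-exponential (that is, $\omega(1/n)$) nature of the ballot correction term already exploited in Theorem~\ref{thm:opt_no_feedback}, the cheapest large-deviations path is realized by a time-share of a finite set of time-invariant vectors whose prescribed long-run fractions average to the same rate $r$. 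Matching the throughput is immediate, because $\tau$ is the long-run innovative-reception rate and block-level time-sharing simply convex-combines it; the substance is that time-sharing does not degrade, and may strictly improve, the exponent, consistent with Lemma~\ref{lem:range_of_tau_lambda}.

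The hard part will be the delay-exponent inequality in full rigor. Unlike the throughput, $\lambda$ is not a linear functional of the policy, so I cannot simply convex-combine the exponents of the constituent schemes; instead I must compare the dominant exponential decay of $\Pr(T > n)$ for an arbitrary stationary state-dependent policy against that of the best matching time-share. This amounts to showing that adaptivity at block boundaries buys no improvement in the exponent, intuitively because the i.i.d.\ channel makes the least-cost path to the rare event a straight line of constant tilt that a fixed effective rate already achieves, while carefully controlling the $\omega(1/n)$ ballot factor so that it never contributes to the exponent across the varying block structure. Establishing this rigorously, rather than at the heuristic level of $\doteq$ used above, is precisely why the statement is posed as a conjecture.
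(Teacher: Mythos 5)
The statement you set out to prove is posed in the paper as a conjecture: the authors give no proof, only the one-sentence heuristic that any full-rank code can be expressed as a time-sharing of time-invariant schemes, together with an appeal to Conjecture~\ref{conj:full_rank}. Your first stage (reduce an arbitrary scheme, via Conjecture~\ref{conj:full_rank}, to a possibly state-dependent choice of a per-block vector $\mathbf{x}$) therefore mirrors, and inherits the non-rigor of, the paper's own reasoning. The decisive issue is your second stage, and there the difficulty is not merely that the key claim is hard to establish: the claim that ``adaptivity at block boundaries buys no improvement in the exponent'' is false under the natural reading of ``any coding scheme,'' which in this paper must include schemes that choose $\mathbf{x}$ as a function of the block-wise feedback (that feedback exists precisely so the source can adapt).

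Here is a concrete counterexample your plan would have to rule out, but cannot. For block length $d$, consider the state-dependent policy that uses $\mathbf{x}=[1,1,\ldots,1]$ in a block if some in-order decoding occurred in the previous block, and $\mathbf{x}=[d,0,\ldots,0]$ otherwise. Conditioned on the rare event $\{T>kd\}$, every block after the first is a $[d,0,\ldots,0]$ block, so $\Pr(T>kd)=(1-p)\bigl((1-p)^d\bigr)^{k-1}$ and $\lambda=-\log(1-p)$, the optimum of Lemma~\ref{lem:range_of_tau_lambda}. Meanwhile the block type is a two-state Markov chain whose stationary fraction of $[d,0,\ldots,0]$ blocks is $\pi=\frac{1-p}{1+(1-p)-(1-p)^d}<1$, giving
\begin{equation}
\tau \;=\; (1-\pi)\,p \;+\; \pi\,\frac{1-(1-p)^d}{d} \;>\; \frac{1-(1-p)^d}{d},
\end{equation}
where the inequality is strict because $1-(1-p)^d< dp$ for $d\geq 2$ (for $p=0.6$, $d=3$ this is $\tau\approx 0.51$ versus $0.31$). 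But by the paper's own Theorem~\ref{thm:interpolate_bw_schemes}, the exponent of a time-share is the convex combination of its constituents' exponents, and by Lemma~\ref{lem:cost_of_opt_lambda} only $[d,0,\ldots,0]$ attains $-\log(1-p)$; hence every time-sharing policy with $\lambda=-\log(1-p)$ has $\tau\leq (1-(1-p)^d)/d$. So no time-share weakly dominates this adaptive scheme, and the conjecture, read to include feedback-adaptive schemes, fails. The place where your large-deviations intuition breaks is exactly the decoupling that adaptivity permits: $\lambda$ is governed by the policy's behavior \emph{conditioned on being stuck}, while $\tau$ is governed by its \emph{typical} behavior; an adaptive policy pays the throughput price of aggressive retransmission only on the conditional path, whereas a time-share must pay it unconditionally. (The constant-tilt/straight-line heuristic constrains the channel realization on the rare-event path, not the policy, which is free to react to it.) Note this does not contradict any theorem in the paper, since those are all proved within the class of time-invariant schemes and their time-shares; but it does mean your proof strategy cannot succeed as stated, and that any honest attack must first restrict ``coding scheme'' to exclude feedback-dependent selection of $\mathbf{x}$ --- in which case your stage one, modulo the still-unproven Conjecture~\ref{conj:full_rank}, already does essentially all the work --- or else aim to refute the conjecture rather than prove it.
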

We believe this conjecture is true because, it can be shown that any full-rank code can be expressed a time-sharing time-invariant scheme. By Conjecture~\ref{conj:full_rank} it follows that there is no loss of generality in focusing on time-invariant schemes.

There is also no loss of generality in restricting the length of the vector $\mathbf{x}$ to $d$. This is because we are still transmitting $d$ independent coded packets. And adding fewer source packets to the coded combinations, can only increase the exponent $\lambda$. 

\subsection{Analyzing the $(\tau, \lambda)$ of time-invariant schemes}
Given a vector $\mathbf{x}$, define $p_d$, as the probability of decoding the first unseen packet during the block, and $S_d$ as the number of innovative coded packets that are received during that block. We can express $\tau_{\mathbf{x}}$ and $\lambda_{\mathbf{x}}$ in terms of $p_d$ and $S_d$ as,
\begin{align}
(\tau_{\mathbf{x}}, \lambda_{\mathbf{x}}) &= \left(\frac{\expec[S_d]}{d}, -\frac{1}{d}\log(1-p_d) \right),\label{eqn:lambda_d}
\end{align}
where we get throughput $\tau_{\mathbf{x}}$ by normalizing the $\expec[S_d]$ by the number of slots in the slots. We can show that the probability $\Pr(T>kd)$ of no in-order packet being decoded in $k$ blocks is equal $(1-p_d)^k$. Substituting this in \eqref{eqn:lambda_def} we get $\lambda_{\mathbf{x}}$.

\begin{exmp}
Consider the time-invariant scheme $\mathbf{x} = [1, 0, 3, 0]$ where block size $d=4$. That is, we transmit $1$ combination of the first unseen packet, and $3$ combinations of the first $3$ unseen packets. Fig.~\ref{fig:block_wise_exmp} illustrates this scheme for one channel realization. The probability $p_d$ and $\expec[S_d]$ are,
\begin{align}
p_d &= p + (1-p)\binom{3}{3} p^3 (1-p)^0 =  p + (1-p) p^3, \label{eqn:exmp_p_d}\\
\expec[S_d] &= \sum_{i=1}^{3} i \cdot \binom{4}{i} p^i (1-p)^{4-i} + 3p^4 = 4p-p^4, \label{eqn:exmp_E_S_d}
\end{align}
where in \eqref{eqn:exmp_E_S_d}, we get $i$ innovative packets if there are $i$ successful slots for $1 \leq i \leq 3$. But if all $4$ slots are successful we get only $3$ innovative packets. We can substitute \eqref{eqn:exmp_p_d} and \eqref{eqn:exmp_E_S_d} in \eqref{eqn:lambda_d} to get the $(\tau, \lambda)$ trade-off.
\end{exmp}

\begin{figure}[t]
\centering
\includegraphics[scale=0.45]{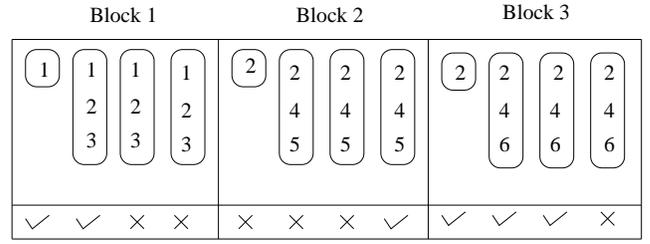}
\caption{Illustration of the time-invariant scheme $\mathbf{x} = [1,0,3,0]$ with block size $d=4$. Each bubble represents a coded combination, and the numbers inside it are the indices of the source packets included in that combination. The check and cross marks denote successful and erased slots respectively. The packets that are ``seen" in each block are not included in the coded packets in future blocks.}
\vspace{-0.1cm}
\label{fig:block_wise_exmp}
\end{figure}

\begin{rem}
\label{rem:uniqueness_of_schemes}
Time-invariant schemes with different $\mathbf{x}$ can be equivalent in terms of the $(\tau, \lambda)$. In general, given $x_1 \geq  1$, if any $x_i = 0$, and $x_{i+1} = x \geq 1$, then the scheme is equivalent to setting $x_i = 1$ and $x_{i+1} = x-1$, keeping all other elements of $\mathbf{x}$ the same. For example, $\mathbf{x} = [ 1, 1, 2,0]$ gives the same $(\tau, \lambda)$ as $\mathbf{x} = [ 1,0,3,0]$. 
\end{rem}
\subsection{Cost of Achieving Optimal $\tau$ or $\lambda$}
In Section~\ref{sec:immediate_feedback} we saw that for the immediate feedback case, we can achieve $(\tau, \lambda) = (p , -\log(1-p))$. However, when the feedback is delayed we can achieve optimal $\tau$ (or $\lambda$) only at the cost of sacrificing the optimality of the other metric. We now find the best achievable $\tau $ (or $\lambda$) with optimal $\lambda$ (or $\tau$).

\begin{lem}[Cost of Optimal Exponent $\lambda$]
\label{lem:cost_of_opt_lambda}
For a feedback delay of $d$ slots, the best achievable throughput is $\tau = (1-(1-p)^d)/d$, when the in-order decoding exponent $\lambda = -\log(1-p)$. 
\end{lem}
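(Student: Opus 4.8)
The plan is to work entirely within the time-invariant framework of Definition~\ref{defn:time_invariant} and to translate the constraint $\lambda = -\log(1-p)$ into a constraint on the per-block decoding probability $p_d$ through \eqref{eqn:lambda_d}. Setting $\lambda_{\mathbf{x}} = -\frac{1}{d}\log(1-p_d) = -\log(1-p)$ gives $p_d = 1-(1-p)^d$. Since the first unseen packet can never be decoded in a block with no successful slot, we always have $p_d \le \Pr(\ge 1\text{ success in the block}) = 1-(1-p)^d$; hence $-\log(1-p)$ is exactly the largest attainable exponent (consistent with Lemma~\ref{lem:range_of_tau_lambda}), and attaining it forces $p_d$ to equal its maximum $1-(1-p)^d$. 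The proof then splits into exhibiting a scheme meeting this bound and computing its throughput (achievability), and showing that no scheme with $p_d = 1-(1-p)^d$ can do better (converse).

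For achievability I would take $\mathbf{x} = [d,0,\dots,0]$, i.e.\ transmit $d$ independent combinations of the single lowest-index unseen packet in every block. The first unseen packet is then decoded as soon as any one of the $d$ slots succeeds, so $p_d = 1-(1-p)^d$ and $\lambda = -\log(1-p)$ as required. For the throughput, all $d$ combinations are multiples of the same single packet and therefore span a one-dimensional space: only the first successful slot is innovative and every later success is redundant. Thus $S_d \in \{0,1\}$ with $\expec[S_d] = \Pr(\ge 1\text{ success}) = 1-(1-p)^d$, and \eqref{eqn:lambda_d} yields $\tau = \expec[S_d]/d = (1-(1-p)^d)/d$.

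For the converse I would show that $p_d = 1-(1-p)^d$ forces $\mathbf{x} = [d,0,\dots,0]$, so that the achieving scheme is essentially unique and its throughput cannot be beaten. Because the decoding event is contained in the ``at least one success'' event, equality of their probabilities forces the two events to coincide almost surely; in particular the first unseen packet must be decoded in every outcome in which exactly one slot succeeds. But if $x_j \ge 1$ for some level $j \ge 2$, consider the positive-probability outcome in which a single level-$j$ slot succeeds and all others are erased. That lone equation involves the $j \ge 2$ lowest unseen packets, so it cannot resolve the first unseen packet, contradicting the requirement. Hence $x_j = 0$ for all $j \ge 2$ and $x_1 = d$. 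Finally, by Conjecture~\ref{conj:time_invar} time-sharing cannot help, since any mixture maintaining the exponent $-\log(1-p)$ must place all of its weight on component schemes having that exponent, each of which is $[d,0,\dots,0]$.

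The main obstacle is the converse, and specifically characterizing exactly when the first unseen packet becomes in-order decodable within a block as a function of $\mathbf{x}$ and the erasure pattern. The clean combinatorial fact I rely on --- that a single received equation spanning two or more unknowns decodes none of them --- does the heavy lifting, but I would still need to verify that residual equations or ``seen'' state carried over from earlier blocks cannot circumvent the single-success argument. Since the packet in question is by definition not yet decoded in order at the start of the block, a lone high-level success within the block cannot decode it either, so the argument goes through and the stated throughput is optimal.
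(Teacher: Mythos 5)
Your proposal is correct and takes essentially the same route as the paper's proof: translate the constraint $\lambda = -\log(1-p)$ into $p_d = 1-(1-p)^d$ via \eqref{eqn:lambda_d}, identify $\mathbf{x} = [d, 0, \cdots, 0]$ as the unique time-invariant scheme attaining it, and compute $\expec[S_d] = 1-(1-p)^d$ to obtain $\tau = (1-(1-p)^d)/d$. You in fact go further than the paper, which merely asserts that $[d,0,\cdots,0]$ is ``the only scheme that can achieve this'': your single-success argument (a lone received equation involving $j \geq 2$ unseen packets cannot decode the first of them, since the first unseen packet is by definition undecodable from the pre-block knowledge) and your observation that time-sharing must put all weight on maximal-exponent schemes supply the converse justification the paper leaves implicit.
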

\begin{proof}
If we want to achieve $\lambda = -\log(1-p)$, we require $p_d$ in \eqref{eqn:lambda_d} to be equal to $1-(1-p)^d$. The only scheme that can achieve this is $\mathbf{x} = [d, 0, \cdots, 0]$, where we transmit $d$ copies of the first unseen packet. The number of innovative packets $S_d$ received in every block is $1$ with probability $1-(1-p)^d$, and zero otherwise. Hence, the best achievable throughput is $\tau = (1-(1-p)^d)/d$ with optimal $\lambda = -\log(1-p)$.
\end{proof}

This result gives us insight on how much bandwidth (which is proportional to $1/\tau$) is needed for a highly delay-sensitive application which needs $\lambda$ to be as large as possible. 

\begin{lem}[Cost of  Optimal Throughput $\tau$]
\label{lem:cost_of_opt_tau}
For a feedback delay of $d$ slots,  the best achievable in-order decoding exponent is $\lambda = -\log(1-p)/d$, when the throughput $\tau = p$. 
\end{lem}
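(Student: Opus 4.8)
The plan is to mirror the structure of the companion Lemma~\ref{lem:cost_of_opt_lambda}, but now fixing $\tau = p$ and optimizing $\lambda$ over time-invariant schemes. First I would identify which schemes achieve the throughput-optimal value $\tau = p$. Recall that $\tau_{\mathbf{x}} = \expec[S_d]/d$, so $\tau = p$ requires $\expec[S_d] = pd$; that is, on average we must collect exactly $pd$ innovative packets per block. Since each of the $d$ slots succeeds independently with probability $p$, the expected number of successful slots is $pd$. Hence we need every successful slot to yield an innovative packet, with no ``wasted'' receptions. This forces the scheme to never transmit a combination whose required source packets are all already seen or decodable—otherwise a successful slot could fail to increase the rank, strictly lowering $\expec[S_d]$ below $pd$. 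The cleanest such scheme is $\mathbf{x} = [0,\dots,0,d]$, where all $d$ transmissions are combinations of the first $d$ unseen packets, so that any $k \le d$ successes give exactly $k$ innovative packets and $\expec[S_d] = pd$ exactly.

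Next I would compute $\lambda$ for this throughput-optimal scheme via \eqref{eqn:lambda_d}, which requires $p_d$, the probability of decoding the first unseen packet during a block. With $\mathbf{x} = [0,\dots,0,d]$, the first unseen packet is decoded in a block only when all $d$ source packets become decodable, i.e.\ when the received combinations span the full $d$-dimensional space. This happens precisely when all $d$ slots succeed, giving $p_d = p^d$. Substituting into \eqref{eqn:lambda_d} yields $\lambda = -\tfrac{1}{d}\log(1-p^d)$. Here I expect a subtlety: this does not match the claimed $-\log(1-p)/d = -\tfrac1d\log(1-p)$, so I would need to reconcile the in-order decoding event carefully—possibly the intended throughput-optimal scheme spreads new packets so that the first unseen packet is decoded whenever the \emph{first} slot in which it appears succeeds, giving $p_d$ of the form $1-(1-p)^{\,m}$ for an appropriate $m$, and the optimum over such schemes lands on $p_d = 1-(1-p)$ per the relevant sub-block, yielding $\lambda = -\log(1-p)/d$.

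The main obstacle is therefore the converse direction: showing that among \emph{all} time-invariant schemes with $\tau = p$, none achieves an exponent larger than $-\log(1-p)/d$. I would argue that the constraint $\expec[S_d] = pd$ severely restricts $\mathbf{x}$—in fact, by the ``no wasted receptions'' observation above together with Remark~\ref{rem:uniqueness_of_schemes}, the throughput-optimal schemes form a small equivalence class, and I would enumerate the admissible $\mathbf{x}$ and maximize $p_d$ over them. The key inequality is that decoding the first unseen packet quickly (large $p_d$) conflicts with collecting innovative packets at the full rate $p$: any allocation that boosts $p_d$ toward $1-(1-p)^d$ necessarily repeats low-index packets, which wastes successful slots and drops $\expec[S_d]$ below $pd$. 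Making this trade-off precise—that $\tau = p$ caps $p_d$ at the value giving $\lambda = -\log(1-p)/d$—is the crux, and I would handle it by showing any deviation from the optimal $\mathbf{x}$ either violates $\expec[S_d] = pd$ or strictly decreases $p_d$.
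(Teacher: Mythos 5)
There is a genuine gap, and it sits exactly where you flagged a ``subtlety'': your proposal never produces a scheme that actually achieves the claimed exponent, which is the heart of the lemma.

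Your candidate $\mathbf{x} = [0,\dots,0,d]$ is indeed throughput-optimal ($\tau = p$, since any $k \le d$ received combinations of the $d$ lowest-index unseen packets are independent), but as you computed, it only gives $p_d = p^d$ and hence $\lambda = -\tfrac1d\log(1-p^d)$, which is \emph{strictly smaller} than the claimed $-\tfrac1d\log(1-p)$. The resolution you gesture at (``possibly the intended scheme spreads new packets\dots'') is precisely the paper's scheme, $\mathbf{x} = [1,1,\dots,1]$: slot $i$ of each block carries a combination of the $i$ lowest-index unseen packets. Because of the staircase (nested, one-new-packet-per-slot) structure, every successful slot is innovative, so $\tau = p$ — this is the observation your plan misses: transmitting the first unseen packet \emph{alone} in the first slot does not waste any reception, contrary to your intuition that ``any allocation that boosts $p_d$ \dots repeats low-index packets, which wastes successful slots.'' At the same time, the first unseen packet is decoded within the block exactly when the block's first slot succeeds (if that slot is erased, every later received combination introduces a new unknown, so in-order decoding is impossible within the block), giving $p_d = p$ and, via \eqref{eqn:lambda_d}, $\lambda = -\log(1-p)/d$. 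Without exhibiting this scheme, your proof establishes neither achievability nor the stated optimum.

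For the converse your outline is directionally right but incomplete, and your claim that the throughput-optimal schemes ``form a small equivalence class'' is not accurate (your own $[0,\dots,0,d]$ is throughput-optimal yet inequivalent to $[1,\dots,1]$ — it has a different $\lambda$; note the paper's own uniqueness assertion is loose on this same point). A clean way to finish, consistent with your plan: $\tau = p$ forces the prefix sums $\sum_{j \le i} x_j \le i$ at every index $i$ with $x_i \ge 1$; under this constraint, in-order decoding within a block requires that \emph{all} of the first $i$ transmissions be received for some $i$ with $\sum_{j \le i} x_j = i$, so $p_d = p^{i_{\min}}$ with $i_{\min} = \min\{i : \sum_{j\le i} x_j = i\}$, which is maximized at $p_d = p$ by taking $x_1 = 1$. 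This recovers the lemma and also shows exactly which schemes are optimal.
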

\begin{proof}
If we want to achieve $\tau = p$, we need to guarantee an innovation packet in every successful slot. The only time invariant scheme that achieve this is $\mathbf{x} = [ 1,1, \cdots 1]$, and the vectors $\mathbf{x}$ that are equivalent to it as given by Remark~\ref{rem:uniqueness_of_schemes}. With $\mathbf{x} = [ 1,1, \cdots 1]$, the probability of decoding the first unseen packet is $p_d = p$. Substituting this in \eqref{eqn:lambda_d} we get $\lambda =-\nicefrac{\log(1-p)}{d}$, the best achievable $\lambda$ when $\tau = p$.
\end{proof}

Fig.~\ref{fig:cost_of_opt_tau_lambda} shows the best achievable $\tau$ and $\lambda$ versus $d$, when the other metric is at its optimal value. The plots in Fig.~\ref{fig:cost_of_opt_tau_lambda} correspond to moving leftwards and downwards respectively from the optimal trade-off $(p, -\log(1-p))$ in Fig.~\ref{fig:inst_fb_no_fb_tradeoff}.

%
\begin{figure}[t]
\centering
\includegraphics[width=3.5in]{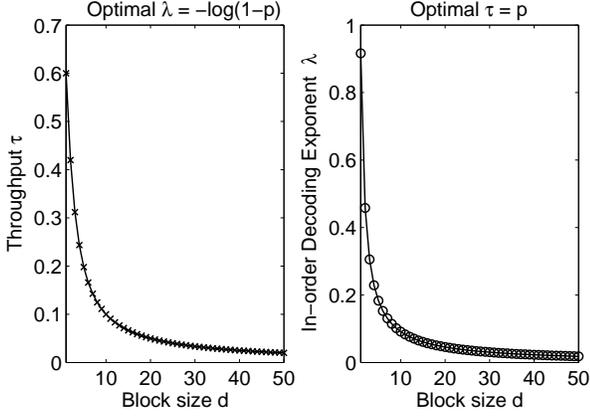}
\caption{Plot of the best achievable $\tau$ (or $\lambda)$ versus $d$, while maintaining the optimal value of the other metric $\lambda$ (or $\tau$), for channel success probability $p=0.6$. }
\label{fig:cost_of_opt_tau_lambda}
\vspace{-0.2cm}
\end{figure}

\subsection{Finding Optimal $(\tau, \lambda)$ Trade-off}
For any given throughput $\tau$, our aim is to find the transmission scheme that achieves the maximum $\lambda$. We first prove that any convex combination of achievable points $(\tau, \lambda)$ can be achieved. 

\begin{thm}[Convex Combinations of Time-invariant Schemes]
\label{thm:interpolate_bw_schemes}
Given time-invariant schemes $\mathbf{x}^{(i)}$ for $1 \leq i \leq B$, we can achieve the throughput-delay trade-off given by any convex combination of the points $(\lambda_{\mathbf{x}^{(i)}}, \tau_{\mathbf{x}^{(i)}})$ by time-sharing between the schemes.  
\end{thm}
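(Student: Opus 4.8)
The plan is to treat the two coordinates of the trade-off separately, since the throughput time-shares linearly while the exponent requires a more careful argument. I would parametrize the time-sharing by a weight vector $(\alpha_1, \ldots, \alpha_B)$ with $\alpha_i \ge 0$ and $\sum_i \alpha_i = 1$, and exhibit an explicit block-level schedule that assigns a fraction $\alpha_i$ of the blocks to scheme $\mathbf{x}^{(i)}$. For rational weights I would use a fixed periodic schedule of period $P$ blocks in which scheme $i$ is used $\alpha_i P$ times; the general case then follows by rational approximation together with the continuity of the convex combination in $\alpha$, or equivalently by passing to the closure of the achievable region.

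For the throughput coordinate the argument is immediate, because the number of innovative packets received is additive across blocks. Over $NP$ blocks the expected total is $\sum_i \alpha_i N P\,\expec[S_d^{(i)}]$, while the number of slots is $NPd$. Normalizing and letting $N \to \infty$, the throughput of the time-shared scheme converges to $\sum_i \alpha_i \tau_{\mathbf{x}^{(i)}}$, the desired convex combination. This step uses only linearity and is insensitive to the order in which the blocks are scheduled.

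For the exponent coordinate I would build on the block-level survival structure already established for a single scheme, namely that during a stretch with no in-order decoding the first unseen packet is fixed and each block independently fails to decode it, giving $\Pr(T > kd) = (1-p_d)^k$. Extending this to the time-shared schedule, a gap of $n$ slots spans $n/d$ blocks, of which approximately $\alpha_i n/d$ use scheme $i$, so the conditional survival probability factorizes as $\prod_i (1-p_d^{(i)})^{\alpha_i n/d}$. Since $\lambda_{\mathbf{x}^{(i)}} = -\tfrac{1}{d}\log(1-p_d^{(i)})$, this equals $\exp\!\big(-n \sum_i \alpha_i \lambda_{\mathbf{x}^{(i)}}\big)$, whence $\Pr(T>n) \doteq \exp\!\big(-n \sum_i \alpha_i \lambda_{\mathbf{x}^{(i)}}\big)$ and the exponent is the claimed convex combination.

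The main obstacle, and the point I would emphasize, is that this factorization, and hence the convex combination rather than merely the minimum $\min_i \lambda_{\mathbf{x}^{(i)}}$, depends crucially on using a \emph{fine} interleaving. If the schemes were instead scheduled in long contiguous runs whose lengths grow with the horizon, a single long gap could fall entirely within the run of the scheme having the smallest exponent, so the tail of $T$ would be governed by $\min_i \lambda_{\mathbf{x}^{(i)}}$. Keeping the period $P$ fixed (independent of $n$) forces every sufficiently long gap to traverse all $B$ schemes in the proportions $\alpha_i$, which is exactly what makes the survival probability multiply in the correct ratios. I would therefore devote most of the effort to justifying that, under the fixed-period schedule, the per-block decoding failures within a gap remain conditionally independent with the scheme-dependent probabilities $1-p_d^{(i)}$, so that the product form above is legitimate and inherits the same memoryless structure that was used for a single time-invariant scheme.
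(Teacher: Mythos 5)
Your proposal is correct and establishes the theorem, but by a genuinely different route than the paper. The paper proves the $B=2$ case (extending to general $B$ by induction) using \emph{randomized} time-sharing: in each block, scheme $\mathbf{x}^{(1)}$ is chosen independently with probability $\mu$, the conditional survival probability is written as $\Pr(T>kd)=(1-p_{d_1})^h(1-p_{d_2})^{k-h}$ for the realized count $h$ of scheme-$1$ blocks, and the weak law of large numbers ($h/k\rightarrow\mu$) is then invoked inside the exponent. You instead use a \emph{deterministic} periodic schedule of fixed period $P$, handle general $B$ directly, and observe that every long gap then contains $\alpha_i n/d + O(P)$ blocks of scheme $i$, so the product form of the survival probability holds exactly up to sub-exponential factors. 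Your route buys something real: under the paper's i.i.d.\ randomization, each block unconditionally fails to decode the first unseen packet with the \emph{mixture} probability $\mu(1-p_{d_1})+(1-\mu)(1-p_{d_2})$, independently across blocks, so the exact exponent of the randomized scheme is $-\frac{1}{d}\log\left(\mu(1-p_{d_1})+(1-\mu)(1-p_{d_2})\right)$, which by strict concavity of the logarithm is strictly smaller than the claimed convex combination $\mu\lambda_{\mathbf{x}^{(1)}}+(1-\mu)\lambda_{\mathbf{x}^{(2)}}$ whenever $p_{d_1}\neq p_{d_2}$; substituting the WLLN limit $h/k\rightarrow\mu$ into the conditional expression silently exchanges the expectation over $h$ with the exponential, which is precisely where the binomial fluctuations of $h$ matter. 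Your deterministic interleaving makes $h$ non-random, so this issue never arises, and your diagnosis that long contiguous runs would degrade the exponent to $\min_i \lambda_{\mathbf{x}^{(i)}}$ correctly identifies why fine interleaving is essential. Note that both your argument and the paper's rest on the same structural assertion, stated but not proved in the paper, that conditioned on the scheme used in each block, the events of failing to decode the first unseen packet are independent across blocks with probabilities $1-p_d^{(i)}$; you are not assuming anything beyond what the paper itself assumes.
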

\begin{proof}
Here we prove the result for $B=2$, that is time-sharing between two schemes. It can be extended to general $B$ using induction. Given two time-invariant schemes $\mathbf{x}^{(1)}$ and $\mathbf{x}^{(2)}$ which achieve the throughput-delay trade-offs $(\lambda_{\mathbf{x}^{(1)}}, \tau_{\mathbf{x}^{(1)}})$ and $(\lambda_{\mathbf{x}^{(2)}}, \tau_{\mathbf{x}^{(2)}})$ respectively, consider a time-sharing strategy where, in each block we use the scheme $\mathbf{x}^{(1)}$ with probability $\mu$ and scheme $\mathbf{x}^{(2)}$ otherwise. Then, it is easy to see that the throughput on the new scheme is $\tau = \mu \tau_{\mathbf{x}^{(1)}} + (1-\mu) \tau_{\mathbf{x}^{(2)}}$. 

Now we prove the in-order decoding exponent $\lambda$ is also a convex combinations of $\lambda_{\mathbf{x}^{(1)}}$ and $\lambda_{\mathbf{x}^{(2)}}$. Let $p_{d_1}$ and $p_{d_2}$ be the probabilities of decoding the first unseen packet in a block using scheme $\mathbf{x}^{(1)}$ and $\mathbf{x}^{(2)}$ respectively. Suppose in an interval with $k$ blocks, we use scheme $\mathbf{x}^{(1)}$ for $h$ blocks, and scheme $\mathbf{x}^{(2)}$ in the remaining blocks, we have 
\begin{equation}
\Pr(T > kd) = (1-p_{d_1})^{h} (1-p_{d_2})^{k-h}.
\end{equation}
Using this we can evaluate $\lambda$ as,
\begin{align}
\lambda &= \lambda_{\mathbf{x}^{(1)}} \lim_{k \rightarrow \infty} \frac{h}{k} +\lambda_{\mathbf{x}^{(2)}} \lim_{k \rightarrow \infty} \frac{k-h}{k} \label{eqn:lambda_convex_comb} \\
 &= \mu \lambda_{\mathbf{x}^{(1)}} + (1-\mu) \lambda_{\mathbf{x}^{(2)}}
\end{align}
where we get \eqref{eqn:lambda_convex_comb} using \eqref{eqn:lambda_d}. As $k \rightarrow \infty$, by the weak law of large numbers, the fraction $h/k$ converges to $\mu$. Hence, we have shown that we can interpolate between the $(\tau, \lambda)$ trade-off of two policies by time-sharing between them.
\end{proof}

The main implication of Theorem~\ref{thm:interpolate_bw_schemes} is that, to find the optimal $(\tau,\lambda)$ trade-off, we only have to find the points $(\tau_{\mathbf{x}} ,\lambda_{\mathbf{x}})$ that lie on the convex envelope of the achievable region spanned by all possible $\mathbf{x}$. We determine this optimal trade-off for $d=2, 3$ in Lemma~\ref{lem:d_2_tradeoff} and Lemma~\ref{lem:d_3_tradeoff} below. 

\begin{lem}[Optimal Trade-off for $d=2$]
\label{lem:d_2_tradeoff}
The optimal $(\tau, \lambda)$ trade-off is the line joining points $\left( (1-(1-p)^2)/2, -\log(1-p)\right)$ and $\left(p, -\log(1-p)/2\right)$. 
\end{lem}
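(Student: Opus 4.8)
The plan is to exploit the fact that for $d=2$ there are only finitely many time-invariant schemes, enumerate them exhaustively, and then invoke Theorem~\ref{thm:interpolate_bw_schemes} to fill in the connecting line by time-sharing. Since $\mathbf{x} = [x_1, x_2]$ with $x_1 + x_2 = 2$ and the $x_i$ non-negative integers, the only candidates are $[2,0]$, $[1,1]$, and $[0,2]$. My first step is to compute $(\tau_{\mathbf{x}}, \lambda_{\mathbf{x}})$ for each via \eqref{eqn:lambda_d}, that is, by evaluating $p_d$ and $\expec[S_d]$.

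The schemes $[2,0]$ and $[1,1]$ are precisely the ones already analyzed in Lemma~\ref{lem:cost_of_opt_lambda} and Lemma~\ref{lem:cost_of_opt_tau} with $d=2$, so I would quote their values directly: $[2,0]$ yields the first endpoint $\left((1-(1-p)^2)/2,\,-\log(1-p)\right)$, and $[1,1]$ yields the second endpoint $\left(p,\,-\log(1-p)/2\right)$. The only remaining scheme to treat from scratch is $[0,2]$, in which we send two combinations of the first two unseen packets. Here the first unseen packet is recovered only when both slots succeed (a single combination of $s_1,s_2$ cannot isolate $s_1$), so $p_d = p^2$ and $\lambda = -\log(1-p^2)/2$; counting innovative packets gives $\expec[S_d] = 2p$ and hence $\tau = p$.

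Next I would show that $[0,2]$ is dominated: it achieves the same throughput $\tau = p$ as $[1,1]$, but since $1-p^2 > 1-p$ for $0 < p < 1$, it has a strictly smaller exponent. Thus it lies strictly below $[1,1]$ and cannot appear on the upper boundary of the achievable region. With $[0,2]$ eliminated, the only undominated points are the two endpoints; checking that $(1-(1-p)^2)/2 = p - p^2/2 < p$ confirms they are distinct, with $[2,0]$ as the higher-$\lambda$, lower-$\tau$ corner. Finally, Theorem~\ref{thm:interpolate_bw_schemes} lets me realize every convex combination of these two corners by time-sharing, so the joining segment is achievable and, being the upper envelope of the finite achievable point set, is optimal.

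The main obstacle is not any single calculation but making the converse argument airtight: I must verify that the three listed vectors genuinely exhaust all time-invariant schemes for $d=2$ (so no omitted scheme could rise above the line), and that Remark~\ref{rem:uniqueness_of_schemes} is applied correctly so that no equivalent relabeling produces a new frontier point. Once the enumeration is shown to be complete and $[0,2]$ is shown to be dominated, the convexity argument via Theorem~\ref{thm:interpolate_bw_schemes} closes the proof at once.
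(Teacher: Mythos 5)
Your proposal is correct and follows the same route as the paper's proof: enumerate the time-invariant schemes for $d=2$, read off the two corner points from Lemma~\ref{lem:cost_of_opt_lambda} and Lemma~\ref{lem:cost_of_opt_tau}, and invoke Theorem~\ref{thm:interpolate_bw_schemes} to achieve the connecting segment by time-sharing and conclude optimality from the convex envelope. The one substantive difference is how the third vector $[0,2]$ is eliminated. The paper appeals to Remark~\ref{rem:uniqueness_of_schemes} and asserts that every remaining vector is \emph{equivalent} to $[2,0]$ or $[1,1]$; but that remark's hypothesis requires $x_1 \geq 1$, so it does not cover $[0,2]$, and in fact $[0,2]$ is \emph{not} equivalent to $[1,1]$: since both transmitted packets are generic combinations of the two lowest-index unseen packets, the first unseen packet is decoded only when both slots in a block succeed, giving $p_d = p^2$ and hence $\lambda = -\log(1-p^2)/2 < -\log(1-p)/2$ at the same throughput $\tau = p$. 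Your direct computation and strict-domination argument is therefore not just an alternative but the more rigorous way to close the converse: it shows $[0,2]$ lies strictly below the claimed segment, so the upper boundary of the achievable region is exactly the line joining the two corner points. In short, you reproduce the paper's argument and, in passing, repair a small imprecision in its elimination step.
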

\begin{proof}
When $d=2$ there are only two possible time-invariant schemes  $\mathbf{x} = [2,0]$ and $[1, 1]$ that give unique $(\tau, \lambda)$. By Remark~\ref{rem:uniqueness_of_schemes}, all other valid vectors $\mathbf{x}$ are equivalent to one of these schemes. From Lemma~\ref{lem:cost_of_opt_lambda} and Lemma~\ref{lem:cost_of_opt_tau} we know that the $(\tau,\lambda)$ for these schemes are $( (1-(1-p)^2)/2, -\log(1-p))$ and $(p, -\log(1-p)/2)$ respectively. By Theorem~\ref{thm:interpolate_bw_schemes} we can achieve all $(\tau, \lambda)$ on the line joining these two points by time-sharing between the two policies. 
\end{proof}

\begin{lem}[Optimal Trade-off for $d=3$]
\label{lem:d_3_tradeoff}
The optimal $(\tau, \lambda)$ trade-off when $d=3$ is the piecewise linear curve joining points
\begin{align}
(\tau_A, \lambda_A) &= \left( \frac{1-(1-p)^3}{3}, -\log(1-p)\right), \label{eqn:tau_lambda_A} \\
(\tau_B, \lambda_B) &= \left( \frac{2p(2-p)}{3}, -\frac{2}{3}\log(1-p)\right), \label{eqn:tau_lambda_B}\\
(\tau_C, \lambda_C) &= \left(p, -\frac{\log(1-p)}{3}\right). \label{eqn:tau_lambda_C}
\end{align} 

\end{lem}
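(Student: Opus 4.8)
The plan is to turn the statement into a finite check. For $d=3$ there are only $\binom{5}{2}=10$ admissible vectors $\mathbf{x}=[x_1,x_2,x_3]$ in the sense of Definition~\ref{defn:time_invariant}, so once their points $(\tau_{\mathbf{x}},\lambda_{\mathbf{x}})$ are plotted, Theorem~\ref{thm:interpolate_bw_schemes} tells me that the region achievable by time-sharing is the convex hull of these finitely many points. The optimal $(\tau,\lambda)$ trade-off is then the upper (concave) boundary of that hull, and it suffices to identify its vertices and show that all remaining schemes fall on or below it.

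First I would prune the list using Remark~\ref{rem:uniqueness_of_schemes}: schemes such as $[2,0,1]$ and $[1,0,2]$ collapse onto $[2,1,0]$ and $[1,1,1]$ respectively, and the $x_1=0$ schemes, which never transmit the lowest unseen packet in isolation, can be set aside as candidates for the boundary. This leaves the representatives $[3,0,0]$, $[2,1,0]$, $[1,2,0]$ and $[1,1,1]$. For each I would evaluate $(\tau_{\mathbf{x}},\lambda_{\mathbf{x}})$ through \eqref{eqn:lambda_d} by computing $p_d$, the probability that the lowest unseen packet is decoded within a block, and $\expec[S_d]$, the expected per-block rank increase, taking care that a repeated combination of already-spanned packets does not count as innovative.

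Two of the vertices are handed to me by earlier results: $[3,0,0]$ is exactly the optimal-exponent scheme of Lemma~\ref{lem:cost_of_opt_lambda}, giving $(\tau_A,\lambda_A)$, and $[1,1,1]$ is the optimal-throughput scheme of Lemma~\ref{lem:cost_of_opt_tau}, giving $(\tau_C,\lambda_C)$. The middle vertex $(\tau_B,\lambda_B)$ arises from $[2,1,0]$: the lowest unseen packet is sent alone in the first two slots, so it is decoded unless both of those slots are erased, which yields $1-p_d=(1-p)^2$ and hence $\lambda_B=-\tfrac{2}{3}\log(1-p)$; the value $\tau_B=\expec[S_d]/3$ then follows from tallying the rank increase across the three slots.

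The step I expect to be the main obstacle is the converse: confirming that $A$, $B$, $C$ are genuinely the vertices of the concave envelope, i.e.\ that no other scheme lies above the chords $AB$ or $BC$. Concretely, I would verify that $[1,2,0]$, the only remaining representative with $x_1\ge 1$, sits strictly below the line through $(\tau_B,\lambda_B)$ and $(\tau_C,\lambda_C)$, and likewise that each discarded $x_1=0$ scheme is dominated, so that the envelope acquires no extra vertex between $B$ and $C$ (or outside the range $[\tau_A,\tau_C]$). Once these finitely many chord comparisons are checked, Theorem~\ref{thm:interpolate_bw_schemes} supplies the forward direction by time-sharing between adjacent vertices to reach every point of the segments $AB$ and $BC$; and since no single scheme, and hence no time-sharing of schemes, can lie outside the convex hull of the plotted points, the piecewise-linear curve $A$--$B$--$C$ is optimal.
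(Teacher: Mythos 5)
Your overall strategy is the same as the paper's: enumerate the $d=3$ vectors, collapse equivalent ones via Remark~\ref{rem:uniqueness_of_schemes}, take the endpoints $A$ and $C$ from Lemma~\ref{lem:cost_of_opt_lambda} and Lemma~\ref{lem:cost_of_opt_tau}, evaluate $[2,1,0]$ and $[1,2,0]$ directly, and combine a chord-domination check with Theorem~\ref{thm:interpolate_bw_schemes}. On one point you are actually more careful than the paper: Remark~\ref{rem:uniqueness_of_schemes} requires $x_1\geq 1$, so the schemes $[0,3,0]$, $[0,2,1]$, $[0,1,2]$, $[0,0,3]$ are \emph{not} literally equivalent to the four representatives, and they do need the explicit domination check you call for (they are all dominated; e.g.\ $[0,3,0]$ has the same throughput as $[1,2,0]$ but a strictly smaller probability $p_d$ of decoding the first unseen packet).

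The genuine gap is precisely the step you deferred: ``tallying the rank increase'' for $[2,1,0]$ does not yield the stated $\tau_B$. Carrying out the tally with success indicators $Z_1,Z_2,Z_3$: slots $1$ and $2$ carry scalar multiples of the lowest unseen packet, and slot $3$ carries a combination of the two lowest unseen packets, so the rank increase in a block is $\min(Z_1+Z_2,1)+Z_3$ (a second copy of the first packet is never innovative, while the slot-$3$ combination always is, since no previously received combination involves the second unseen packet). Hence $\expec[S_d]=\left(1-(1-p)^2\right)+p=3p-p^2$, i.e.\ $\tau_{[2,1,0]}=p(3-p)/3$, which differs from the value $2p(2-p)/3$ in \eqref{eqn:tau_lambda_B} for every $0<p<1$. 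The corrected value is exactly what the paper's own formula \eqref{eqn:close_to_optimal} gives at $a=2$, $d=3$, and the paper asserts those suggested schemes coincide with the $d=3$ optimal trade-off, so \eqref{eqn:tau_lambda_B} is an error in the statement itself (consistent with having computed $\expec[S_d]$ as $2p_d$, i.e.\ crediting two innovative packets whenever the first unseen packet is decoded). Consequently the lemma as written cannot be proved: the stated point $B$ exceeds the true vertex's throughput by $p(1-p)/3$ at the same $\lambda$ and lies strictly outside the convex hull of the ten achievable points, so no time-sharing of time-invariant schemes attains it. Your plan is sound, but only for the corrected vertex $\left(p(3-p)/3,\,-\tfrac{2}{3}\log(1-p)\right)$; with that correction the comparisons you propose do go through (e.g.\ $[1,2,0]$ lies strictly below the chord $BC$ because $(1-p)\log(1-p)+\log(1+p)>0$ for $0<p<1$). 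Had you executed your own tally rather than asserting its outcome, you would have caught this; as written, the proposal certifies a value that the computation it appeals to contradicts.
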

\begin{proof}
When $d=3$ there are four time-invariant schemes $\mathbf{x}^{(1)} = [3, 0, 0], \mathbf{x}^{(2)} = [2,1,0],\mathbf{x}^{(3)} =[1,2,0]$ and $\mathbf{x}^{(4)} = [1,1,1]$ that give unique $(\tau, \lambda)$, as given by Definition~\ref{defn:time_invariant} and Remark~\ref{rem:uniqueness_of_schemes}. From Lemma~\ref{lem:cost_of_opt_lambda} and Lemma~\ref{lem:cost_of_opt_tau} we know that $(\tau_{\mathbf{x}^{(1)}},\lambda_{\mathbf{x}^{(1)}}) =(\tau_A, \lambda_A)$ and $(\tau_{\mathbf{x}^{(4)}},\lambda_{\mathbf{x}^{(4)}}) =(\tau_C, \lambda_C)$. 

For the other two schemes, we first evaluate $p_d$ and $\expec[S_d]$ and substitute them in \eqref{eqn:lambda_d} to get, $(\tau_{\mathbf{x}^{(2)}},\lambda_{\mathbf{x}^{(2)}}) =(\tau_B, \lambda_B)$ and
\[ (\tau_{\mathbf{x}^{(3)}},\lambda_{\mathbf{x}^{(3)}}) =\left( (3p-p^3)/3, -(\log(1-p)^2(1+p))/3\right). \]

We can show that $\mathbf{x}^{(2)}$ gives a better trade-off than $\mathbf{x}^{(3)}$ by showing that for all $p$, the slopes of the lines joining $(\tau_{\mathbf{x}^{(i)}},\lambda_{\mathbf{x}^{(i)}})$ for $i = 1, \cdots 4$ satisfy,
\begin{align}
\frac{\lambda_{\mathbf{x}^{(1)}} - \lambda_{\mathbf{x}^{(2)}}}{ \tau_{\mathbf{x}^{(1)}} - \tau_{\mathbf{x}^{(2)}}} &\geq \frac{\lambda_{\mathbf{x}^{(1)}} - \lambda_{\mathbf{x}^{(3)}}}{ \tau_{\mathbf{x}^{(1)}} - \tau_{\mathbf{x}^{(3)}}} \\
\frac{\lambda_{\mathbf{x}^{(2)}} - \lambda_{\mathbf{x}^{(4)}}}{ \tau_{\mathbf{x}^{(2)}} - \tau_{\mathbf{x}^{(4)}}} &\leq \frac{\lambda_{\mathbf{x}^{(3)}} - \lambda_{\mathbf{x}^{(4)}}}{ \tau_{\mathbf{x}^{(3)}} - \tau_{\mathbf{x}^{(4)}}}.
\end{align}
%

\end{proof}

The trade-off for $d=2$ and $d=3$ with $p= 0.6$ is shown in Fig.~\ref{fig:block_wise_tradeoff}. The point below the piece-wise linear curve for $d=3$, corresponding to the sub-optimal scheme $\mathbf{x}^{(3)}=[1,2,0]$. We observe that the optimal trade-off becomes significantly worse are $d$ increases. From this we can imply that frequent feedback to the source is important in delay-sensitive applications to ensure fast in-order decoding of packets. 

\begin{figure}[t]
\centering
\includegraphics[width=3.5in]{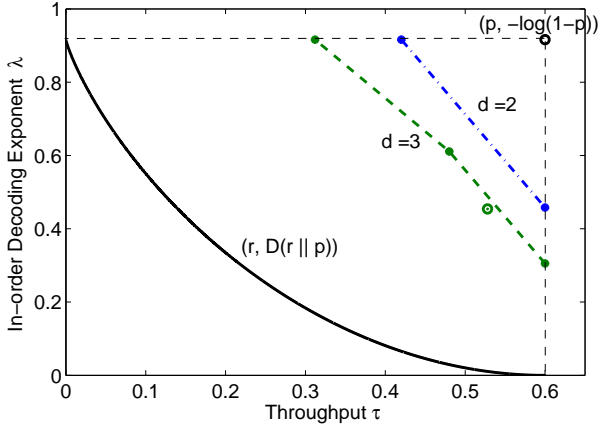}
\caption{The throughput-delay trade-off with $p=0.6$ for $d=2,3$ which can be shown to be optimal over all convex combinations of time-invariant schemes. The point just below the piece-wise linear curve for $d=3$, corresponding to the sub-optimal scheme $\mathbf{x} = [1, 2, 0]$.}
\label{fig:block_wise_tradeoff}
\vspace{-0.2cm}
\end{figure}
%


For general $d$, it is hard to search for the $(\tau_{\mathbf{x}}, \lambda_{\mathbf{x}})$ that lie on the optimal trade-off.
We suggest a set of time-invariant schemes which are easy to analyze and they give a good $(\tau, \lambda)$ trade-off. 

\begin{defn}[Suggested Schemes for General $d$]
\label{defn:suggested_schemes}
For general $d$ we suggest schemes with $x_1 = a$ and $x_{d-a+1} = d-a$, for $a = 1, \cdots d$. They give the throughput-delay trade-off
\begin{align}
(\tau,\lambda) &=\left( \frac{1-(1-p)^a + (d-a)p}{d}, -\frac{a}{d} \log (1-p) \right).\label{eqn:close_to_optimal}
\end{align}
\end{defn}

Fig.~\ref{fig:close_to_optimal_tradeoff} shows the trade-off given by \eqref{eqn:close_to_optimal}  for different values of $d$. Observe that for $d=2$ and $d=3$ the suggested schemes coincide with the optimal trade-off we derived in Lemma~\ref{lem:d_2_tradeoff} and Lemma~\ref{lem:d_3_tradeoff} and shown in Fig.~\ref{fig:block_wise_tradeoff}. As $d \rightarrow \infty$, and $a = \alpha d$, the trade-off converges to $( (1-\alpha)p, -\alpha \log(1-p))$ for $ 0 \leq \alpha \leq 1$, which is the line joining $(0, -\log(1-p))$ and $(p, 0)$. Numerical results suggest that for small $d$  this class of schemes gives the best trade-off among all possible time-invariant schemes $\mathbf{x}$, and close to optimal in general.

\begin{figure}[t]
\centering
\includegraphics[scale=0.45]{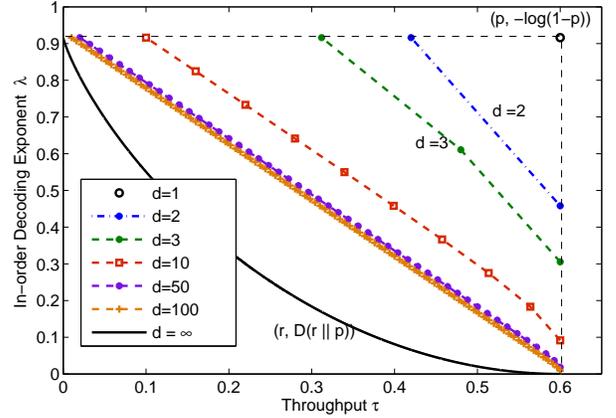}
\caption{The throughput-delay trade-off of the suggested coding schemes in Definition~\ref{defn:suggested_schemes} with $p=0.6$ and different values of feedback delay $d$. Numerical results suggest that this trade-off is optimal all convex combinations of time-invariant schemes for small $d$.}
\label{fig:close_to_optimal_tradeoff}
\vspace{-0.5cm}
\end{figure}
%
\section{Concluding Remarks}
\label{sec:conclu}
In this paper we analyze how block-wise feedback affects the trade-off between throughput $\tau$ and in-order decoding exponent $\lambda$, which measures the burstiness in-order packet decoding in streaming communication. When there is immediate feedback, we can simultaneously achieve the optimal $\tau$ and $\lambda$. But as the block size increases, and the frequency of feedback reduces, we have to compromise on at least one of these metrics. Our analysis gives us the insight that frequent feedback is crucial to ensure in-order packet delivery in delay-sensitive applications. 

Given that feedback comes in blocks of $d$ slots, we present a spectrum of coding schemes that span different points on the $(\tau, \lambda)$ trade-off as shown in Fig.~\ref{fig:close_to_optimal_tradeoff}. Depending upon the delay-sensitivity and bandwidth limitations of the applications, these codes provide the flexibility to choose a suitable operating point on trade-off. The proposed codes can be shown to be optimal over the broad class of full-rank codes for small feedback delay $d$, and when there is no feedback. 



\bibliographystyle{ieeetr}

\begin{thebibliography}{10}

\bibitem{sandvine_report}
{Sandvine Intelligent Networks}, ``{Global Internet Phenomena Report}.''
  \url{http://www.sandvine.com}, Mar. 2013.

\bibitem{luby}
{M. Luby, M. Mitzenmacher, A. Shokrollahi, D. Spielman, and V. Stemann},
  ``{Practical loss-resilient codes},'' in {\em {ACM symposium on Theory of
  computing}}, (New York, NY, USA), pp.~150--159, ACM, 1997.
 
 
\bibitem{emin_thesis}
E.~Martinian, {\em Dynamic Information and Constraints in Source and Channel
  Coding}.
\newblock PhD thesis, MIT, Cambridge , USA, Sept. 2004.
 

\bibitem{badr_khisti_isit}
{A. Badr, A. Khisti, W. Tan and J. Apostoupoulos}, ``{Robust Streaming Erasure
  Codes based on Deterministic Channel Approximations},'' {\em International
  Symposium on Information Theory}, July 2013.


\bibitem{patil_badr_khisti}
{P. Patil, A. Badr, A. Khisti and W. Tan}, ``{Delay-Optimal Streaming Codes
  under Source-Channel Rate Mismatch},'' {\em {Asilomar}}, Nov. 2013.


\bibitem{arq_shah_medard}
{J. Sundararajan, D. Shah and M. M\'{e}dard}, ``{ARQ for Network Coding},'' in
  {\em International Symp. on Information Theory}, pp.~1651--1655, July 2008.


\bibitem{delay_control_online_nw_coding}
{J. Barros, R. Costa, D. Munaretto, and J. Widmer}, ``{Effective Delay Control
  in Online Network Coding},'' in {\em {International Conference on Computer
  Communications}}, pp.~208--216, Apr. 2009.
  
\bibitem{huan_paper}
{H. Yao, Y. Kochman and G. Wornell}, ``{A Multi-Burst Transmission Strategy for
  Streaming over Blockage Channels with Long Feedback Delay},'' {\em {IEEE
  Journal on Selected Areas in Communications}}, Dec. 2011.

\bibitem{gauri_isit_paper}
{G. Joshi, Y. Kochman, G. Wornell}, ``{On Playback Delay in Streaming
  Communication},'' {\em International Symp. on Information Theory}, July 2012.
\bibitem{thomas_cover}
T.~Cover and J.~Thomas, {\em Elements of information theory}.
\newblock New York, NY, USA: Wiley-Interscience, 2nd~ed., 1991.

\bibitem{durrett}
R.~Durrett, {\em {Probability: Theory and Examples}}.
\newblock Cambridge University Press, 4th~ed., 2010.

\end{thebibliography}

\end{document}